\documentclass[11pt]{article}


\usepackage[T1]{fontenc}
\usepackage{textcomp}
\usepackage{palatino}
\usepackage{mathpazo}
\usepackage{stmaryrd}


\usepackage{hyperref}
\hypersetup{pdfpagemode=UseNone}


\usepackage{fullpage}
\usepackage{amsfonts}
\usepackage{amssymb}
\usepackage{amsmath}
\usepackage{latexsym}
\usepackage{amsthm}
\usepackage{eepic}
\usepackage{sectsty}
\usepackage[usenames]{color}


\newtheorem{theorem}{Theorem}
\newtheorem{lemma}[theorem]{Lemma}

\newtheorem{cor}[theorem]{Corollary}
\theoremstyle{definition}
\newtheorem{definition}[theorem]{Definition}

\newtheorem{fact}[theorem]{Fact}

\newenvironment{proofof}[1]{\noindent{\bf Proof of #1:}}{\qed\\}


\newcommand{\tinyspace}{\mspace{1mu}}

\newcommand{\norm}[1]{\left\lVert\tinyspace#1\tinyspace\right\rVert}

\newcommand{\defeq}{\stackrel{\smash{\text{\tiny def}}}{=}}
\newcommand{\tr}{\operatorname{Tr}}

\def\({\left(}
\def\){\right)}
\def\I{\mathbb{1}}

\newcommand{\imax}{\mathrm{Imax}}
\newcommand{\imin}{\mathrm{Imin}}
\newcommand{\local}{\mathrm{local}}
\newcommand{\myglobal}{\mathrm{global}}

\def\real{\mathbb{R}}

\def\O{\mathcal{O}}

\def\ve{{\varepsilon}}

\def\polylog{\mathrm{polylog}}

\newcommand{\suppress}[1]{}


\newcommand{\seq} [1] {#1_1,\ldots,#1_m}

\title{A parallel approximation algorithm for mixed packing and covering semidefinite programs}

\author{
Rahul Jain\thanks{
Centre for Quantum Technologies and Department of Computer Science,
National University of Singapore,  Block S15, 3 Science Drive~2,
Singapore 11754.
Email: \texttt{rahul@comp.nus.edu.sg}.
} \\
National U.\ Singapore
\and
Penghui Yao\thanks{
Centre for Quantum Technologies and Department of Computer Science,
National University of Singapore,  Block S15, 3 Science Drive~2,
Singapore 11754.
Email: \texttt{pyao@nus.edu.sg}.
} \\
National U.\ Singapore
}

\date{January 28, 2012}

\begin{document}

\maketitle

\abstract{We present a parallel approximation algorithm for a class of {\em mixed packing and covering} semidefinite programs which generalize on the class of positive semidefinite programs as considered by Jain and Yao~\cite{JainY11}. As a corollary we get a faster approximation algorithm for positive semidefinite programs with better dependence of the parallel running time on the approximation factor, as compared to that of Jain and Yao~\cite{JainY11}. Our algorithm and analysis is on similar lines as that of Young~\cite{Young01} who considered analogous linear programs.}



\setcounter{page}{1}

\section{Introduction}

Fast parallel  approximation algorithms for semidefinite programs have been the focus of study of many recent works (e.g.~\cite{AroraHK05, AroraK07, Kale07, JainW09, JainUW09, JainJUW10}) and have resulted in many interesting applications including the well known $\mathsf{QIP=PSPACE}$~\cite{JainJUW10} result. In many of the previous works, the running time of the algorithms had polylog dependence on the size of the input program but in addition also had polynomial dependence of some {\em width} parameter (which varied for different algorithms). Sometimes (for specific instances of input programs) the width parameter could be as large as the size of the program  making it an important bottleneck. Recently Jain and Yao~\cite{JainY11} presented a fast parallel approximation algorithm for an important subclass of semidefinite programs, called as {\em positive semidefinite programs}, and their algorithm had no dependence on any width parameter. Their algorithm was inspired by an algorithm by Luby and Nisan~\cite{LubyN93} for {\em positive linear programs}. In this work we consider a more general {\em mixed packing and covering} optimization problem. We first consider the following feasibility task ${\bf Q1}$. 

\vspace{0.1in}

\noindent {\bf Q1:} Given $n\times n$ positive semidefinite matrices $\seq{P}, P$ and non-negative diagonal matrices $\seq{C}, C$  and $\ve\in(0,1)$, find an $\ve$-{\it approximate feasible} vector $x\geq 0$ such that  (while comparing matrices we let $\geq, \leq$ represent the L\"{o}wner order),
$$\sum_{i=1}^m x_iP_i\leq(1+\ve)P \quad \mbox{  and   } \quad \sum_{i=1}^m x_iC_i\geq C$$ or show that the following is infeasible  
$$\sum_{i=1}^m x_iP_i\leq P \quad \mbox{ and  } \quad \sum_{i=1}^m x_iC_i\geq C \enspace .$$ 

We present an algorithm for ${\bf Q1}$ running in parallel time $\polylog (n,m) \cdot \frac{1}{\ve^4} \cdot \log \frac{1}{\ve}$. Using this and standard binary search, a multiplicative $(1 - \ve)$ approximate solution can be obtained for the following optimization task ${\bf Q2}$ in parallel time $\polylog (n,m) \cdot \frac{1}{\ve^4} \cdot \log \frac{1}{\ve}$.  

\vspace{0.1in}

\noindent {\bf Q2:} Given $n\times n$ positive semidefinite matrices $\seq{P},  P$ and non-negative diagonal matrices $\seq{C}, C$,
\begin{center}
  \begin{minipage}{2in}
    \begin{align*}
      \text{maximize:}\quad & \gamma \\
      \text{subject to:}\quad & \sum_{i=1}^m x_iP_i\leq  P \\
      & \sum_{i=1}^m x_iC_i\geq \gamma C \\		
      & \forall i \in [m]: x_i \geq 0.
    \end{align*}
  \end{minipage}
\end{center}      
The following special case of {\bf Q2} is referred to as a positive semidefinite program.  

\vspace{0.1in}

\noindent {\bf Q3:} Given $n\times n$ positive semidefinite matrices $\seq{P},P$ and non-negative scalars $c_1, \ldots, c_m$,
\begin{center}
  \begin{minipage}{2in}
    \begin{align*}
      \text{maximize:}\quad & \sum_{i=1}^m x_ic_i \\\\
      \text{subject to:}\quad & \sum_{i=1}^m x_iP_i\leq P \\		
      & \forall i \in [m]: x_i \geq 0.
    \end{align*}
  \end{minipage}
\end{center}      
Our algorithm for {\bf Q1} and its analysis is on similar lines as the algorithm and analysis of Young~\cite{Young01} who had considered analogous questions for linear programs. As a corollary we get an algorithm for approximating positive semidefinite programs ({\bf Q3}) with better dependence of the parallel running time on $\ve$ as compared to that of Jain and Yao~\cite{JainY11} (and arguably with simpler analysis). Very recently, in an independent work, Peng and Tangwongsan~\cite{PengT11} also presented a fast parallel algorithm for positive semidefinite programs. Their work is also inspired by Young~\cite{Young01}.

\section{Algorithm and analysis}
We mention without elaborating that using standard arguments the feasibility question {\bf Q1} can be easily transformed, in parallel time $\polylog(mn)$, to the special case when $P$ and $C$ are identity matrices and we consider this special case from now on. Our algorithm is presented in Figure~\ref{fig:alg} ~ .

\subsection*{Idea of the algorithm} The algorithm starts with an initial value for $x$ such that  $\sum_{i=1}^m x_i P_i \leq \I$. It makes increments to the vector $x$ such that with each increment, the increase in $\norm{\sum_{i=1}^m x_i P_i}$ is not more than $(1+\O(\ve))$ times the increase in the minimum eigenvalue of $\sum_{i=1}^m x_i C_i$. We argue that it is always possible to increment $x$ in this manner if the input instance is feasible, hence the algorithm outputs $\mathsf{infeasible}$ if it cannot find such an increment to $x$. The algorithm stops when  the minimum eigenvalue of $\sum_{i=1}^m x_i C_i$ has exceeded $1$. Due to our condition on the increments, at the end  of the algorithm we also have $\sum_{i=1}^m x_i P_i \leq (1+\O(\ve)) \I$. We obtain handle on the largest and smallest eigenvalues of concerned matrices via their {\em soft} 
versions, which are more easily handled functions of those matrices (see definition in the next section). 

\begin{figure}[!ht]

\noindent\hrulefill

{
\small

\noindent {\bf Input :} $n\times n$ positive semidefinite matrices $\seq{P}$, non-negative diagonal matrices $\seq{C}$, and error parameter $\ve \in (0,1)$.

\medskip

\noindent {\bf Output :} Either $\mathsf{infeasible}$, which means there is no $x$ such that ($\I$ is the identity matrix),
$$\sum_{i=1}^m x_iP_i\leq \I \quad \mbox{ and } \quad \sum_{i=1}^m x_iC_i\geq \I \enspace.$$ 
OR an $x^* \in \real^m$ such that 
$$\sum_{i=1}^m x_i^* P_i\leq (1+9\ve) \I \quad \mbox{ and } \quad \sum_{i=1}^m x_i ^* C_i\geq \I \enspace .$$ 

\begin{enumerate}

\item Set $x_j = \frac{1}{m\norm{P_j}}$.

\item Set $N = \frac{1}{\ve}\left(\norm{\sum_{i=1}^m x_iP_i}+2\ln n+\ln m\right)$.

\item While $\lambda_{\min}(\sum_{i=1}^mx_iC_i)<N$ ($\lambda_{\min}$ represents minimum eigenvalue), do

\begin{enumerate}

\item Set
\begin{align*}
\local_j(x) &= \frac{\tr(\exp(\sum_{i=1}^m x_iP_i)\cdot P_j)}{\tr(\exp(-\sum_{i=1}^m x_iC_i)\cdot C_j)} \quad \text{ and } \\
\myglobal(x) &= \frac{\tr{\exp(\sum_{i=1}^m x_iP_i)}}{\tr(\exp(-\sum_{i=1}^mx_iC_i))} .
\end{align*}

\item If $g$ is not yet set or $\min_j\{\local_j(x)\}>g(1+\ve)$,  set $g = \myglobal(x)$.

\item If $\min_j \{\local_j(x) \}> \myglobal(x)$ , return $\mathsf{infeasible}$.

\item For all $j \in [m]$, set 
$C_j = \Pi_j  \cdot C_j \cdot \Pi_j$, where $\Pi_j$ is the projection onto the eigenspace of $\sum_{i=1}^mx_iC_i$ with eigenvalues at most $N$.

\item Choose increment vector $\alpha\geq0$ and scalar $\delta>0$ such that
$$\forall j: ~ \alpha_j = x_j \delta \text{ if } \local_j(x)\leq g(1+\ve), \text{ else } \alpha_j = 0, \text{ and } $$
$$\max\{\norm{\sum_{i=1}^m \alpha_iP_i},\norm{\sum_{i=1}^m\alpha_iC_i}\}=\ve .$$

\item Set $x = x+\alpha$.
\end{enumerate}

\item Return $x^* = x/N$.

\end{enumerate}

}
\noindent\hrulefill

\caption{Algorithm} \label{fig:alg}

\end{figure}

\subsection*{Correctness analysis} 

We begin with the definitions of soft maximum and minimum eigenvalues of a positive semidefinite matrix $A$. They are inspired by analogous definitions made in Young~\cite{Young01} in the context of vectors.
\begin{definition}
For positive semidefinite matrix $A$, define 
$$\imax(A)\defeq\ln\tr\exp(A),$$
and
$$\imin(A)\defeq-\ln\tr\exp(-A).$$
Note that $\imax(A)\geq\norm{A}$ and $\imin(A)\leq\lambda_{\min}(A)$, where $\lambda_{\min}(A)$ is the minimum eigenvalue of $A$.
\end{definition}
We show the following lemma in the appendix, which  shows that if a small increment is made in the vector $x$, then changes in $\imax(\sum_{j=1}^m x_j A_j )$ and $\imin(\sum_{j=1}^m x_j A_j )$  can be bounded appropriately.
\begin{lemma}\label{lem:smooth}
Let $\seq{A}$ be positive semidefinite matrices and let $x\geq 0,\alpha\geq 0$ be vectors in $\real^m$. If $\norm{\sum_{i=1}^m\alpha_iA_i}\leq\ve \leq 1$, then
$$\imax(\sum_{j=1}^m (x_j + \alpha_j) A_j )-\imax(\sum_{j=1}^m x_j  A_j)\leq \frac{(1+\ve)}{\tr(\exp(\sum_{i=1}^mx_iA_i))}\sum_{j=1}^m\alpha_j \tr(\exp(\sum_{i=1}^mx_iA_i)A_j),$$
and
$$\imin(\sum_{j=1}^m (x_j + \alpha_j)  A_j)-\imin(\sum_{j=1}^m x_j  A_j)\geq\frac{(1-\ve/2)}{\tr(\exp(-\sum_{i=1}^mx_iA_i))}\sum_{j=1}^m\alpha_j\tr(\exp(-\sum_{i=1}^mx_iA_i)A_j).$$
\end{lemma}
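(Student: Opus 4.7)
Let me set $X = \sum_i x_i A_i$ and $Y = \sum_i \alpha_i A_i$ so that both are positive semidefinite and, by hypothesis, $\norm{Y} \leq \ve$. The strategy is to bound the changes in $\imax$ and $\imin$ by first passing through the corresponding traces, and then to handle the noncommutativity of $X$ and $Y$ via the Golden--Thompson inequality $\tr\exp(X+Y) \leq \tr(\exp(X)\exp(Y))$ and its analogue for $-X,-Y$.

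For the $\imax$ part, I would begin from
\[
\imax(X+Y) - \imax(X) = \ln\frac{\tr\exp(X+Y)}{\tr\exp(X)} \leq \frac{\tr\exp(X+Y) - \tr\exp(X)}{\tr\exp(X)},
\]
using $\ln(1+z) \leq z$. Applying Golden--Thompson upstairs gives
\[
\tr\exp(X+Y) - \tr\exp(X) \leq \tr\bigl(\exp(X)(\exp(Y) - I)\bigr).
\]
Now I would establish the scalar inequality $e^y - 1 \leq (1+\ve)y$ for $y \in [0,\ve]$ with $\ve \leq 1$ (which follows from $\sum_{k\geq 2} y^k/k! \leq y\ve\sum_{k\geq 2}1/k! = y\ve(e-2) \leq y\ve$), so that by functional calculus $\exp(Y) - I \preceq (1+\ve)Y$ as PSD matrices. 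Since $\exp(X)$ is PSD, pairing with the trace preserves the inequality, giving
\[
\tr\bigl(\exp(X)(\exp(Y)-I)\bigr) \leq (1+\ve)\tr(\exp(X)\, Y) = (1+\ve)\sum_j \alpha_j\, \tr(\exp(X) A_j),
\]
which, divided by $\tr\exp(X)$, yields the desired bound on $\imax$.

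For the $\imin$ part, the plan is symmetric but slightly more delicate. Starting from
\[
\imin(X+Y) - \imin(X) = \ln\frac{\tr\exp(-X)}{\tr\exp(-X-Y)} \geq 1 - \frac{\tr\exp(-X-Y)}{\tr\exp(-X)},
\]
using $\ln z \geq 1 - 1/z$, I apply Golden--Thompson to $\tr\exp(-X-Y) \leq \tr(\exp(-X)\exp(-Y))$, obtaining a lower bound of $\tr(\exp(-X)(I - \exp(-Y)))/\tr\exp(-X)$. The key scalar inequality here is $1 - e^{-y} \geq y - y^2/2 \geq (1-\ve/2)y$ for $y \in [0,\ve]$, which follows because the Taylor series of $1 - e^{-y}$ is alternating with terms decreasing in absolute value when $y \leq 1$. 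Functional calculus then gives $I - \exp(-Y) \succeq (1-\ve/2)Y$, and pairing with $\exp(-X)$ finishes the bound.

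I expect the main obstacle to be a bookkeeping one rather than a conceptual one: one must be careful that in both halves the matrix inequality $f(Y) \succeq/\preceq c\,Y$ used on the eigenvalues of $Y$ transfers correctly to the trace estimate against $\exp(\pm X)$. This is valid because $\exp(\pm X)$ is PSD and $\tr(CA) \leq \tr(CB)$ whenever $C \succeq 0$ and $A \preceq B$ are Hermitian. The only genuinely nontrivial ingredient beyond these elementary tools is Golden--Thompson, which is precisely what lets us avoid any commutator terms between $X$ and $Y$.
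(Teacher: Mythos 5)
Your proof is correct and uses essentially the same ingredients as the paper: Golden--Thompson to decouple $X$ and $Y$, the matrix inequalities $\exp(Y)-I \preceq (1+\ve)Y$ and $I-\exp(-Y)\succeq(1-\ve/2)Y$ (stated in the paper as a separate Fact, which you reprove via the scalar Taylor estimates and functional calculus), and the elementary concavity bounds $\ln(1+z)\leq z$ and $\ln z\geq 1-1/z$. The only cosmetic difference is the order of operations — you linearize the logarithm first and then apply Golden--Thompson and the exponential bound inside the trace, whereas the paper applies Golden--Thompson and the exponential bound inside the logarithm first and linearizes last — but these are trivially interchangeable and yield the identical chain of estimates.
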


\begin{lemma}\label{lem:gincrease}
At step 3(e) of the algorithm, for any $j$ with $\alpha_j > 0$ we have,
$$ \frac{\tr(\exp(\sum_{i=1}^m x_iP_i)\cdot P_j)}{ \tr (\exp(\sum_{i=1}^m x_iP_i)) } \leq (1+\ve) \frac{\tr(\exp(-\sum_{i=1}^m x_iC_i)\cdot C_j)}{\tr(\exp(-\sum_{i=1}^mx_iC_i))} .$$ 
\end{lemma}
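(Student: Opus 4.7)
The inequality to be proved is, after dividing through by the strictly positive quantity $\tr(\exp(-\sum_i x_iC_i)\cdot C_j)/\tr\exp(-\sum_i x_iC_i)$, equivalent to $\local_j(x) \leq (1+\ve)\myglobal(x)$. I therefore aim to establish this simpler form. By the selection rule in step 3(e), any $j$ with $\alpha_j > 0$ must satisfy $\local_j(x) \leq g(1+\ve)$; thus it suffices to show the invariant $g \leq \myglobal(x)$ at step 3(e) of every iteration.

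I would prove this invariant by induction on the iteration number. In the first iteration $g$ is unset, so step 3(b) fires and assigns $g = \myglobal(x)$, giving equality. At a later iteration there are two subcases. If step 3(b) of that iteration again updates $g$ to the current $\myglobal(x)$, we are done. Otherwise, $g$ retains a value $\myglobal(x^{(s)})$ set at some earlier iteration $s$, and I would invoke the inductive hypothesis together with the monotonicity of $\myglobal$ in $x$ for fixed $C_j$'s. That monotonicity holds because each $P_i \geq 0$ and $\tr\exp$ is monotone in the L\"owner order, so $\tr\exp(\sum_i x_iP_i)$ is nondecreasing in $x$ while $\tr\exp(-\sum_i x_iC_i)$ is nonincreasing.

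The main obstacle is the effect of step 3(d), which replaces each $C_j$ by $\Pi C_j \Pi \leq C_j$. Because this shrinks the $C_j$'s in the L\"owner order, it can enlarge $\tr\exp(-\sum_i x_iC_i)$ and thereby depress $\myglobal(x)$ between consecutive visits to step 3(b). The proof must show that this potential drop is absorbed without breaking the invariant $g \leq \myglobal(x)$, which requires a careful accounting of the (at most $n$) eigenvalues projected away and the role of the threshold $N$ in controlling their contribution. Reconciling the old $g = \myglobal(x^{(s)})$ with the possibly reduced $\myglobal(x^{(t)})$ at a later iteration $t$ is the delicate part of the argument.
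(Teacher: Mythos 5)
Your approach is the same as the paper's: reduce the claim to the invariant $g \leq \myglobal(x)$ at step 3(e) and argue that it is maintained. The monotonicity of $\myglobal$ in $x$ for fixed $C_j$'s is likewise the paper's argument for the step-3(f) increments. So the structure is right.

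But you flag the crucial step and leave it unresolved, which means the proposal is not a proof. Where you and the paper differ is the sign of the effect of step 3(d). The paper's proof asserts that replacing $C_j$ by $\Pi_j C_j\Pi_j$ "only decreases $\tr(\exp(-\sum_{i} x_iC_i))$" and hence only increases $\myglobal(x)$. Your observation is the opposite, and it appears to be the correct one: the $C_j$'s and $\Pi_j$ are simultaneously diagonal, so the projection sets the diagonal entries $d_k > N$ of $\sum_i x_iC_i$ to $0$, sending the corresponding eigenvalue of $\exp(-\sum_i x_iC_i)$ from $e^{-d_k}$ up to $1$; hence $\tr\exp(-\sum_i x_iC_i)$ \emph{increases} and $\myglobal(x)$ \emph{drops}. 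Since before the projection $\tr\exp(-\sum_i x_iC_i) > e^{-\lambda_{\min}(\sum_i x_iC_i)} > e^{-N}$, the drop can be as large as roughly a factor of $e^{N}$, so it cannot be waved away. Thus the paper's proof of Lemma~\ref{lem:gincrease} has a gap at exactly the place you identify, and its one-line justification of the 3(d) case is not merely terse but has a sign error.

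Two points to tighten in your own write-up. First, the obstruction is more immediate than you state: the projection sits \emph{between} steps 3(b) and 3(e) within a single iteration, so even the case where $g$ is freshly assigned in the current iteration is not "done" as you claim, and your base case fails for the same reason. Second, "a careful accounting of the eigenvalues projected away and the role of the threshold $N$" names the to-do rather than discharging it; a correct resolution would likely require either changing the algorithm (e.g.\ taking traces restricted to the range of $\Pi_j$, so the projected-out coordinates do not contribute spurious $1$'s to $\tr\exp(-\sum_i x_iC_i)$) or a different potential/argument. As written, the proposal is an accurate diagnosis of a real gap, but not a proof of the lemma.
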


\begin{proof}
Consider any execution of step 3(e) of the algorithm. Fix $j$ such $\alpha_j > 0$. Note that,
$$\frac{\local_j(x)}{\myglobal(x)} = \frac{\tr(\exp(\sum_{i=1}^m x_iP_i)\cdot P_j)\cdot\tr(\exp(-\sum_{i=1}^mx_iC_i))}{ \tr (\exp(\sum_{i=1}^m x_iP_i)) \cdot \tr(\exp(-\sum_{i=1}^m x_iC_i)\cdot C_j)}.$$
We will show that $\myglobal(x) \geq g$ throughout the algorithm and this will show the desired since that $\local_j(x) \leq (1+\ve) g \leq (1+\ve)\myglobal(x)$. 

At step 3(b) of the algorithm, $g$ can be equal to $\myglobal(x)$. Since $x$ never decreases during the algorithm, at step 3(a), $\myglobal(x)$ can only increase. At step 3(d),  the modification of $C_j$s only decreases $\tr(\exp(-\sum_{i=1}^mx_iC_i))$ and hence again $\myglobal(x)$ can only increase.  
\end{proof}

\begin{lemma}\label{lem:iminvsimax}
For each increment of $x$ at step 3(f) of the algorithm, 
$$\imax(\sum_{j=1}^m (x_j + \alpha_j) P_j )-\imax(\sum_{j=1}^m x_j  P_j) \leq (1+\ve)^3 \(\imin(\sum_{j=1}^m (x_j + \alpha_j) C_j )-\imin(\sum_{j=1}^m x_j  C_j) \) . $$ 
\end{lemma}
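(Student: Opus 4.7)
The plan is to combine Lemma~\ref{lem:smooth} (applied to both the $P$-side and $C$-side) with the per-coordinate bound from Lemma~\ref{lem:gincrease}. The hypothesis $\norm{\sum_i\alpha_i A_i}\le\ve$ needed to invoke Lemma~\ref{lem:smooth} is guaranteed for both $A_j=P_j$ and $A_j=C_j$ by the choice of $\alpha$ in step~3(e), which enforces $\max\{\norm{\sum_i\alpha_iP_i},\norm{\sum_i\alpha_iC_i}\}=\ve$.

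First, I would apply Lemma~\ref{lem:smooth} with $A_j=P_j$ to upper bound the left-hand side by
$$\frac{(1+\ve)}{\tr(\exp(\sum_{i}x_iP_i))}\sum_{j=1}^m \alpha_j\, \tr(\exp(\sum_{i}x_iP_i)\,P_j),$$
and with $A_j=C_j$ (the current, possibly modified $C_j$ after step 3(d)) to lower bound the right-hand side (ignoring the $(1+\ve)^3$) by
$$\frac{(1-\ve/2)}{\tr(\exp(-\sum_{i}x_iC_i))}\sum_{j=1}^m \alpha_j\, \tr(\exp(-\sum_{i}x_iC_i)\,C_j).$$
Next, for each $j$ with $\alpha_j>0$, Lemma~\ref{lem:gincrease} gives precisely the inequality needed to bound the $j$-th summand from the $P$-side by $(1+\ve)$ times the corresponding summand on the $C$-side. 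Since $\alpha_j\ge 0$ and the terms with $\alpha_j=0$ contribute nothing, summing over $j$ yields
$$\frac{1}{\tr(\exp(\sum_{i}x_iP_i))}\sum_{j}\alpha_j\tr(\exp(\sum_{i}x_iP_i)P_j)\leq (1+\ve)\cdot\frac{1}{\tr(\exp(-\sum_{i}x_iC_i))}\sum_{j}\alpha_j\tr(\exp(-\sum_{i}x_iC_i)C_j).$$

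Chaining the three inequalities together shows that the ratio of the $\imax$-increment to the $\imin$-increment is at most $(1+\ve)^2/(1-\ve/2)$. The final step is the elementary check that $(1+\ve)^2/(1-\ve/2)\le (1+\ve)^3$, which is equivalent to $(1+\ve)(1-\ve/2)\ge 1$, i.e.\ $\ve/2-\ve^2/2\ge 0$, and this holds for all $\ve\in(0,1)$. This finishes the proof.

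There is no real obstacle beyond being careful about two bookkeeping points: (i) ensuring that the $C_j$ appearing in Lemma~\ref{lem:smooth} are the post-3(d) versions so that the statement of Lemma~\ref{lem:gincrease} (whose proof already accommodates step~3(d)) applies consistently, and (ii) restricting the summation over $j$ to indices with $\alpha_j>0$, which is precisely the set for which Lemma~\ref{lem:gincrease} provides the local-vs-global bound.
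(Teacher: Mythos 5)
Your proposal is correct and follows essentially the same route as the paper's proof: apply Lemma~\ref{lem:smooth} to both sides, use Lemma~\ref{lem:gincrease} termwise on the indices with $\alpha_j>0$, and finish by noting $(1+\ve)^2/(1-\ve/2)\le(1+\ve)^3$ for $\ve\in(0,1)$. The bookkeeping points you flag are exactly the right ones to watch, and your verification of the final elementary inequality is a small bit of detail the paper leaves implicit.
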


\begin{proof}
Consider,
\begin{align*}
\lefteqn{\imax(\sum_{j=1}^m (x_j + \alpha_j) P_j )-\imax(\sum_{j=1}^m x_j  P_j) } \\
&\leq \frac{(1+\ve)}{\tr(\exp(\sum_{i=1}^mx_iP_i))} \sum_{j=1}^m\alpha_j \tr(\exp(\sum_{i=1}^mx_iP_i)P_j) \quad \mbox{(from Lemma~\ref{lem:smooth})}\\
&\leq \frac{(1+\ve)^2}{\tr(\exp(-\sum_{i=1}^mx_iC_i))}  \sum_{j=1}^m\alpha_j \tr(\exp(-\sum_{i=1}^mx_iC_i)C_j) \quad \mbox{(from Lemma~\ref{lem:gincrease} and step 3(e) of the algorithm)} \\
& \leq \frac{(1+\ve)^2}{1-\ve/2}\(\imin(\sum_{j=1}^m (x_j + \alpha_j)  C_j)-\imin(\sum_{j=1}^m x_j  C_j) \)  \quad \mbox{(from Lemma~\ref{lem:smooth})}.
\end{align*}
This shows the desired.
\end{proof}

\begin{lemma}\label{lem:feasible}
If the input instance $\seq{P}, \seq{C}$ is feasible, that is there exists vector $y \in \real^m$ such that 
$$\sum_{i=1}^m y_iP_i\leq \I \quad \mbox{ and  } \quad \sum_{i=1}^m y_iC_i\geq \I \enspace ,$$ 
then always at step 3(c) of the algorithm, $\min_j \{  \local_j(x) \} \leq  \myglobal(x)$. Hence the algorithm will return some $x^*$.

If the algorithm outputs $\mathsf{infeasible}$, then the input instance is not feasible.
\end{lemma}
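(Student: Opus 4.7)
The second assertion of the lemma is just the contrapositive of the first, so the substance is to prove: if a feasible $y\geq 0$ exists (so $\sum_j y_j P_j \leq \I$ and $\sum_j y_j C_j \geq \I$ for the original input matrices), then $\min_j \local_j(x)\leq \myglobal(x)$ at every execution of step 3(c).

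My plan is an LP-duality style proof by contradiction. Write $E = \exp(\sum_i x_i P_i)$ and $F = \exp(-\sum_i x_i C_i)$ for the current quantities in the algorithm. The negation $\local_j(x) > \myglobal(x)$ for every $j$ rearranges to
\[
\tr(E\, P_j)\cdot \tr(F) \;>\; \tr(E)\cdot \tr(F\, C_j) \qquad \text{for every } j.
\]
Multiplying the $j$-th inequality by $y_j \geq 0$, summing, and pulling the sum inside each trace gives
\[
\tr\!\bigl(E\,\textstyle\sum_j y_j P_j\bigr)\cdot\tr(F) \;>\; \tr(E)\cdot\tr\!\bigl(F\,\textstyle\sum_j y_j C_j\bigr).
\]
The constraint $\sum_j y_j P_j \leq \I$ together with $E \geq 0$ forces the left side to be at most $\tr(E)\tr(F)$. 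In the clean case where no prior trimming has yet occurred (so the current $C_j$'s are still the original ones), the constraint $\sum_j y_j C_j \geq \I$ together with $F \geq 0$ forces the right side to be at least $\tr(E)\tr(F)$, contradicting the strict inequality above, and giving the desired conclusion.

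The main obstacle is accounting for the modifications step 3(d) makes to the $C_j$'s. After one or more trims by the projector $\Pi$ onto the eigenspace of $\sum_i x_i C_i$ with eigenvalues at most $N$, the current $C_j$ equals $\Pi\, C_j^{\mathrm{orig}}\,\Pi$ for the current live subspace $V = \op{range}(\Pi)$. Then original feasibility only yields $\sum_j y_j C_j \geq \Pi$ for the \emph{current} matrices, and the right-side lower bound degrades to $\tr(E)(\tr(F) - (n-\dim V))$ because $F$ equals the identity on $V^\perp$. To close the resulting gap, the argument must exploit the defining property of the trim: the coordinates removed were precisely those where the corresponding diagonal entry of $\sum_i x_i C_i$ had already exceeded $N$, so those coordinates are effectively already covered at the target level and need not be accounted for by $y$ any longer. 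Concretely, I would restrict the LP-duality calculation to the live subspace (where both packing and covering feasibility for $y$ remain intact with respect to the restricted matrices) and then bridge between the ``restricted'' $\myglobal$ on $V$ and the algorithm's $\myglobal$ on the whole space using the explicit form of $F$ on $V^\perp$. This bridging step is the most delicate piece of the proof.
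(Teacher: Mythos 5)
Your core LP-duality/averaging argument is the same as the paper's: feasibility with $y\ge 0$ gives
$\tr\bigl(E\sum_j y_j P_j\bigr)\le\tr(E)$ and $\tr\bigl(F\sum_j y_j C'_j\bigr)\ge\tr(F)$, and since some $y_j>0$, a weighted-average argument forces some index $j$ with $\local_j(x)\le\myglobal(x)$. This is exactly the paper's route.

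The obstacle you raise about the trim is, however, resolved differently from what you propose, and in fact your proposed bridge cannot work. The paper must be read as \emph{removing} the trimmed coordinates from the trace entirely --- i.e., $\exp$ and $\tr$ are taken only on the live subspace $V=\operatorname{range}(\Pi)$ --- rather than merely zeroing them out. Two internal facts force this reading: (i)~the loop condition $\lambda_{\min}\bigl(\sum_i x_i C_i\bigr)<N$ would never become false once a coordinate is zeroed, since that coordinate's eigenvalue stays $0<N$ forever; and (ii)~the proof of Lemma~\ref{lem:gincrease} asserts that the trim ``only decreases $\tr(\exp(-\sum_i x_iC_i))$,'' which is false under zeroing (a coordinate $k$ with $D_{kk}>N$ has its diagonal entry of $F$ jump from $<e^{-N}$ up to $1$, \emph{increasing} the trace) but true under removal. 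Under the remove-from-trace reading, $\sum_j y_j C'_j\ge\I_V$ and $F$ lives on $V$, so $\tr\bigl(F\sum_j y_j C'_j\bigr)\ge\tr(F)$ holds exactly, with no defect term and nothing to bridge. Under the literal zeroing reading that you adopted, the bridge genuinely cannot be built: the contribution of $V^\perp$ to $\tr(F)$ is $n-\dim V$ (each trimmed coordinate contributes $e^0=1$), while $\tr(F\Pi)\le(\dim V)\,e^{-\lambda_{\min}}$ can be exponentially smaller once $\lambda_{\min}$ approaches $N$; this makes $\myglobal(x)$ artificially tiny, so $\min_j\local_j(x)>\myglobal(x)$ could hold even on feasible instances, and the lemma as stated would simply be false. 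So the fix is to correct the interpretation of step~3(d), not to add a bridging estimate on top of it.
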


\begin{proof}
Consider some execution of step 3(c) of the algorithm. Let $C'_1, \ldots, C'_m$ be the current values of $\seq{C}$. Note that if the input is feasible with vector $y$, then we will also have
\begin{align*}
\frac{\tr(\exp(\sum_{i=1}^mx_iP_i)(\sum_{j=1}^my_jP_j))}{\tr(\exp(\sum_{i=1}^mx_iP_i))}\leq 1 \leq 
\frac{\tr(\exp(-\sum_{i=1}^mx_iC'_i)(\sum_{j=1}^my_jC'_j))}{\tr(\exp(-\sum_{i=1}^mx_iC'_i))} .
\end{align*}
Therefore there exists $j \in [m]$ such that 
$$ \frac{\tr(\exp(\sum_{i=1}^mx_iP_i)P_j)}{\tr(\exp(\sum_{i=1}^mx_iP_i))}\leq \frac{\tr(\exp(-\sum_{i=1}^mx_iC'_i)C'_j)}{\tr(\exp(-\sum_{i=1}^mx_iC'_i))} ,$$
and hence $\local_j(x) \leq \myglobal(x)$.

If the algorithm outputs $\mathsf{infeasible}$, then  at that point $\min_j \{  \local_j(x) \} >  \myglobal(x)$ and hence from the argument above $\seq{P}, C'_1, \ldots, C'_m$ is infeasible which in turn implies that $\seq{P}, \seq{C}$ is infeasible.
\end{proof}

\begin{lemma}\label{lem:optimal}
If the algorithm  returns some $x^*$, then 
$$\sum_{i=1}^mx_i^*P_i \leq (1+9\ve)\I \quad \mbox{ and } \quad \sum_{i=1}^mx_i^*C_i \geq \I .$$
\end{lemma}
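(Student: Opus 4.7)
The two conclusions of the lemma decouple. I would first handle the covering bound, which follows directly from the termination condition of the while loop, and then the packing bound, which requires a telescoping argument based on Lemma~\ref{lem:iminvsimax}.

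For the covering bound: the while loop exits only when $\sum_i y_i C_i \succeq N \I$ (with the original $C_i$; once a coordinate's value exceeds $N$ it is zeroed out in step 3(d), but since $y$ is non-decreasing and the $C_i$ are non-negative diagonal, the original $\sum_i y_i C_i$ remains $\geq N$ on that coordinate for all subsequent iterations). Hence $\sum_i y_i^{(T)} C_i \geq N \I$ at termination, and $x^* = y^{(T)}/N$ satisfies $\sum_i x_i^* C_i \geq \I$.

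For the packing bound: using $\norm{A} \leq \imax(A)$, it suffices to show $\imax(\sum_i y_i^{(T)} P_i) \leq (1+9\ve) N$. The initial value satisfies $\imax(\sum_i y_i^{(0)} P_i) \leq \norm{\sum_i y_i^{(0)} P_i} + \ln n \leq 1 + \ln n$ from the initialization $y_i^{(0)} = 1/(m\norm{P_i})$, which is comfortably absorbed by the $\ve N - 2\ln n - \ln m$ slack built into the choice of $N$ in step 2. I would then telescope Lemma~\ref{lem:iminvsimax} across iterations, so that each increment contributes
$$\imax\bigl(\sum_i y_i^{(t+1)} P_i\bigr) - \imax\bigl(\sum_i y_i^{(t)} P_i\bigr) \leq (1+\ve)^3 \bigl[\imin\bigl(\sum_i y_i^{(t+1)} C_i^{(t+1)}\bigr) - \imin\bigl(\sum_i y_i^{(t)} C_i^{(t+1)}\bigr)\bigr],$$
where $C_i^{(t+1)}$ denotes the matrices after step 3(d) in iteration $t$. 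The sum of the right-hand sides should be bounded by $(1+\ve)^3$ times the total growth of $\imin(\sum_i y_i C_i)$, which in turn is at most $N + O(\ln n)$ using the termination $\lambda_{\min}$ and the $\lambda_{\min}$-vs-$\imin$ gap. Using $(1+\ve)^3 \leq 1+7\ve$ for $\ve\in(0,1)$, the bound $(1+9\ve) N$ then falls out after folding in the initial $\imax$.

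The main obstacle is closing the telescoping cleanly despite step 3(d): the matrices $C_i^{(t+1)}$ change between iterations, so the sum above does not telescope directly into $\imin(\sum_i y_i^{(T)} C_i^{(T)}) - \imin(\sum_i y_i^{(0)} C_i^{(0)})$. I expect to handle this by observing that the projection $C_j \mapsto \Pi C_j \Pi$ only decreases $\imin(\sum_i y_i C_i)$ (zeroing diagonal entries of a positive semidefinite diagonal matrix can only increase $\tr \exp(-\cdot)$), so the telescoping inequality still points in the right direction. The aggregate ``loss'' from all projections can then be controlled by the fact that each coordinate is zeroed at most once, contributing only a logarithmic correction that is absorbed into the $\ve N$ slack.
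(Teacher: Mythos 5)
Your proof follows the same high-level path as the paper: decouple the covering bound (from the termination condition of the while loop) and the packing bound (by telescoping Lemma~\ref{lem:iminvsimax}), bound the initial $\imax(\sum_j x_j P_j)$ by $\ln n + 1$, and combine with $\imin \leq \lambda_{\min}$ and the definition of $N$. The paper packages the telescoping via the potential $\Phi(x) = \imax(\sum_j x_j P_j) - (1+\ve)^3\,\imin(\sum_j x_j C_j)$ and argues it is monotone across the whole run, but the content is the same.

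The substantive divergence is in the treatment of step 3(d). The paper asserts that the replacement $C_j \mapsto \Pi_j C_j \Pi_j$ only \emph{increases} $\imin(\sum_j x_j C_j)$, whence $\Phi$ is monotone in the favorable direction. You observe the opposite: for diagonal non-negative $C_j$ and a diagonal projection $\Pi$, one has $\Pi C_j \Pi \leq C_j$, hence $\sum_j x_j\Pi C_j\Pi \leq \sum_j x_j C_j$, hence $\tr\exp(-\sum_j x_j C_j)$ \emph{increases} and $\imin$ \emph{decreases}. Your direction is the correct one; the paper's statement (and the corresponding claim inside Lemma~\ref{lem:gincrease}) looks like a sign error. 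However, this means the projection step moves $\Phi$ in the \emph{wrong} direction, contrary to your claim that "the telescoping inequality still points in the right direction." Your proposed patch — that the aggregate loss is "only a logarithmic correction" because each coordinate is zeroed at most once — is not justified and, as far as I can tell, fails. At the first projection that actually zeroes entries, $\tr\exp(-\sum_j x_j C_j)$ can jump from as small as roughly $e^{-N}$ (when every diagonal entry of $\sum_j x_j C_j$ sits near $N$, with one just below to keep the loop alive) to $\Omega(1)$, dropping $\imin$ by roughly $N = \Theta((\ln mn)/\ve)$. That is a factor $1/\ve$ larger than the $O(\ln mn)$ slack built into $\ve N$, so it cannot be "absorbed" as you suggest; the later projections, taken together, can contribute further non-logarithmic loss as well. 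So you have correctly spotted a real soft spot in the paper's proof, but the repair you sketch leaves a genuine gap: one would need either a quantitatively tighter account of the total $\imin$ loss at projections, or a differently-defined potential (for instance, one keeping the original $C_j$ inside $\imin$) together with a comparison lemma relating $\imin$ increments under the current and original $C_j$'s.
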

\begin{proof}
Because of the condition of the while loop, it is clear that $\sum_{i=1}^mx_i^*C_i \geq \I$.

For $x\in \real^m$, define  $$\Phi(x) \defeq \imax(\sum_{j=1}^m x_j  P_j) - (1+\ve)^3 \cdot \imin(\sum_{j=1}^m x_j  C_j) .$$ 
Note that the update of $C_j$'s  at step 3(d) only increase $\imin(\sum_{j=1}^m x_j  C_j)$. Hence using Lemma \ref{lem:iminvsimax}, we conclude that $\Phi(x)$ is non-decreasing during the algorithm. At step 1 of the algorithm, 
\begin{align*}
\Phi(x) &\leq \imax(\sum_{j=1}^m x_j  P_j)  = \ln \tr (\exp(\sum_{i=1}^m x_iP_i )) \\
& \leq  \ln(n\exp(\norm{\sum_{i=1}^mx_iP_i})) \leq \ln(n\exp(\sum_{i=1}^m\norm{x_iP_i})) = \ln n +1.
\end{align*}
Hence just before the last increment, 
\begin{align*}
\norm{\sum_{i=1}^mx_iP_i} \leq \imax(\sum_{j=1}^m x_j  P_j) & \leq \Phi(x) +  (1+\ve)^3 \cdot \imin(\sum_{j=1}^m x_j  C_j) \\
& \leq \ln n + 1 +(1+\ve)^3 \cdot \imin(\sum_{j=1}^m x_j  C_j) \\
& \leq \ln n + 1 +(1+\ve)^3 \cdot \lambda_{\min}(\sum_{j=1}^m x_j  C_j) \\
& \leq \ln n + 1 + (1+\ve)^3 N\leq  (1+8\ve )N \enspace .
\end{align*}
In the last increment, because of the condition on step 3(e) of the algorithm, $\norm{\sum_{i=1}^mx_iP_i}$ increase by at most $\ve$. Hence $\sum_{i=1}^mx_i^*P_i \leq (1+9\ve)\I$. 
\end{proof}

\subsection*{Running time analysis}
\begin{lemma}\label{phase}
Assume that the algorithm does not return $\mathsf{infeasible}$ for some input instance. The number of times $g$ is increased at step 3(b) of the algorithm is  $\O(N/\ve)$. 
\end{lemma}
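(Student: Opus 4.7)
The plan is to combine two facts: (i) each time $g$ is reset in step~3(b) after its initial setting, its value grows by a factor of at least $1+\ve$; and (ii) the value of $g$ at every step~3(b) lies in an interval $[1, e^{O(N)}]$. Together these bound the number of resets by $1 + \log_{1+\ve}(e^{O(N)}) = O(N/\ve)$, using $\ln(1+\ve) = \Omega(\ve)$ for $\ve \in (0,1)$.

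For (i), after the first setting of $g$, a reset occurs only when the guard of step~3(b) is met: $\min_j \local_j(x) > g_{\text{old}}(1+\ve)$. Under the lemma's hypothesis the algorithm does not return $\mathsf{infeasible}$, so step~3(c) is passed in this iteration and $\min_j \local_j(x) \le \myglobal(x)$. The new value $g_{\text{new}}$ equals $\myglobal(x)$, hence $g_{\text{new}} \ge (1+\ve)\,g_{\text{old}}$.

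For (ii), I would unfold definitions to write $\log \myglobal(x) = \imax(\sum_{i=1}^m x_iP_i) + \imin(\sum_{i=1}^m x_iC_i)$. At step~1, $\sum_{i=1}^m x_iP_i \geq 0$ gives $\tr(\exp(\sum_{i=1}^m x_iP_i)) \geq n$ and $\sum_{i=1}^m x_iC_i \geq 0$ gives $\tr(\exp(-\sum_{i=1}^m x_iC_i)) \leq n$, so $\myglobal(x) \ge 1$, giving $g_{\min} \ge 1$. For the upper bound, the calculation in the proof of Lemma~\ref{lem:optimal} shows that whenever the while loop is active one has $\norm{\sum_{i=1}^m x_iP_i} \le (1+8\ve)N$; together with the monotonicity of this norm under increments of $x$ (since $x$ only grows and each $P_i \geq 0$), this gives $\norm{\sum_{i=1}^m x_iP_i} \le (1+9\ve)N$ at every step~3(b), hence $\imax \le \ln n + (1+9\ve)N$. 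The while-loop condition furnishes $\imin \le \lambda_{\min}(\sum_{i=1}^m x_iC_i) < N$ at every step~3(b). Combining, $\log g = O(N)$ at every reset, as claimed.

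The main obstacle is the uniform upper bound on $\imax(\sum_{i=1}^m x_iP_i)$ at every step~3(b), not just at termination: this is not literally stated as Lemma~\ref{lem:optimal}, but its proof bounds $\imax$ at any point inside the while loop using the non-increasing behaviour of $\Phi(x) \le \ln n + 1$ together with the while-loop condition, and it is this intermediate bound that is needed here.
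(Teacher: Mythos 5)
Your proof is correct and rests on the same two pillars as the paper's: (i) each reset multiplies $g$ by at least $1+\ve$, and (ii) $g$ is confined to a range of logarithmic width $O(N)$. The only genuine difference is how (ii) is obtained. The paper bounds $g$ by bounding $\myglobal(x)$ at \emph{termination} of the while loop, invoking the \emph{statement} of Lemma~\ref{lem:optimal} (so $\snorm{\sum x_iP_i}\leq(1+9\ve)N$ at the end) together with the fact, established in the proof of Lemma~\ref{lem:gincrease}, that $\myglobal(x)$ is non-decreasing — hence every value of $g$ is dominated by the final $\myglobal(x)\leq e^{13N}$. You instead bound $\log\myglobal(x)$ at \emph{every} step~3(b) by re-using the invariant $\Phi(x)\leq\ln n+1$ from inside the proof of Lemma~\ref{lem:optimal} together with the while-loop condition $\lambda_{\min}(\sum x_iC_i)<N$. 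You are right that the bare statement of Lemma~\ref{lem:optimal} does not by itself give a per-iteration bound and that one must reach into its proof; the paper sidesteps this by instead leaning on monotonicity of $\myglobal$. Both routes are sound and of essentially the same length. Two small remarks: your lower bound $g\geq 1$ is a slight tightening of the paper's $g\geq 1/n$ (both suffice); and your sentence invoking ``monotonicity of the norm'' to pass from $(1+8\ve)N$ to $(1+9\ve)N$ is unnecessary — at step~3(b) the while-loop guard has just been checked, so the $(1+8\ve)N$ bound already applies directly there.
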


\begin{proof}
At the beginning of the algorithm $\tr(\exp(-\sum_{i=1}^mx_iC_i)) \leq n$ since each eigenvalue of $\exp(-\sum_{i=1}^mx_iC_i)$ is at most $1$. Also 
$\tr{\exp(\sum_{i=1}^m x_iP_i)} \geq 1 $.
Hence 
$$g = \myglobal(x) = \frac{\tr{\exp(\sum_{i=1}^m x_iP_i)}}{\tr(\exp(-\sum_{i=1}^mx_iC_i))}  \geq \frac{1}{n} \geq \frac{1}{\exp(N)}.$$ 
At the end of the algorithm $\lambda_{\min}({\sum_{i=1}^mx_iC_i)} \leq N + \ve \leq 2N$. Hence 
$$\tr(\exp(-\sum_{i=1}^mx_iC_i)) \geq  \norm{\exp(-\sum_{i=1}^mx_iC_i)} = \exp(-\lambda_{\min}(\sum_{i=1}^mx_iC_i)) \geq \exp(-2N ) . $$ 
Also  (using Lemma~\ref{lem:optimal})
$$\tr(\exp(\sum_{i=1}^mx_iP_i)) \leq  n \norm{\exp(\sum_{i=1}^mx_iP_i)} \leq n \exp((1+9\ve)  N ) \leq \exp(11 N) . $$ 
Hence $g \leq \myglobal(x)\leq \exp(13N) $. 

Whenever  $g$ is updated at step 3(b) of the algorithm, we have 
$$\myglobal(x) \geq \min_j\{\local_j(x)\}>(1+\ve)g$$ 
 just before the update  and 
$\myglobal(x) = g$ just after the update. Thus $g$ increases by at least $(1+\ve)$ multiplicative factor.  Hence the number of times $g$ increases is $\O(N/\ve)$.
\end{proof}

\begin{lemma}\label{lem:increment}
Assume that the algorithm does not return $\mathsf{infeasible}$ for some input instance. The number of iterations of the while loop in the algorithm for a fixed value of $g$ is $\O(N\log(mN)/\ve)$.
\end{lemma}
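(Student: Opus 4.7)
The plan is to bound the number of iterations within a phase (fixed $g$) by $\O(N \log(mN)/\ve)$ via a per-coordinate counting argument, combined with a monotonicity property that prevents coordinates from re-entering the active set.

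First, I would derive a uniform lower bound $\delta_t \geq \Omega(\ve/N)$ on the step size. The projection at step 3(d) ensures $\snorm{\sum_i x_i C_i} \leq N$ at the moment $\alpha$ is chosen, and the argument in the proof of Lemma~\ref{lem:optimal} applied to the prefix of the algorithm ending at the current iteration gives $\snorm{\sum_i x_i P_i} \leq (1+8\ve)N$. Since $\snorm{\sum_{j\in A_t} x_j P_j} \leq \snorm{\sum_i x_i P_i}$ and similarly for $C$, the defining equation of step 3(e) forces $\delta_t \geq \ve/((1+8\ve)N)$. Combined with the standing bound $x_j \snorm{P_j} \leq (1+8\ve)N$ and the initial value $x_j^{(0)} = 1/(m\snorm{P_j})$, each $x_j$ can grow by at most the factor $(1+8\ve)mN$, so any single coordinate $j$ can be active for at most $\log_{1+\Omega(\ve/N)}((1+8\ve)mN) = \O(N\log(mN)/\ve)$ iterations within the phase.

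To upgrade the per-coordinate bound to a bound on the total number of iterations, I would argue that within a phase the active sets $A_t$ form a decreasing sequence: once a coordinate leaves $A_t$ it cannot rejoin. The natural route is to show that $\local_j(x)$ is monotone non-decreasing as $x$ grows and under the projection at step 3(d). The denominator $\tr(\exp(-\sum_i x_i C_i) C_j)$ is monotone non-increasing in $x$---a direct computation since the diagonal $C_i$'s commute with each other and with $C_j$, yielding $\partial_k \tr(e^{-Y} C_j) = -\tr(e^{-Y} C_k C_j) \leq 0$---and the projection at step 3(d) further decreases it by zeroing out eigenspaces of $\sum_i x_i C_i$ above $N$, all with nonnegative contribution. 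If one can moreover show that the numerator $\tr(\exp(\sum_i x_i P_i) P_j)$ is monotone non-decreasing in $x$, then $\local_j$ is monotone, the nested active set follows, and $T = \max_j T_j = \O(N\log(mN)/\ve)$.

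The main obstacle is the monotonicity of the numerator: one must show $\partial_k \tr(e^X P_j) = \int_0^1 \tr(e^{sX} P_k e^{(1-s)X} P_j)\, ds \geq 0$ for PSD $P_k, P_j$. Unlike the denominator, this does not reduce to a commuting calculation, since the $P_i$'s need not commute and $\tr(e^A B)$ is not in general monotone in $A$ under PSD perturbations even when $B$ is PSD. I would try to close it by symmetrizing the Duhamel integrand in $s \leftrightarrow 1-s$ and exploiting that $e^{sX}$ and $e^{(1-s)X}$ are simultaneously diagonalizable to reduce to a Schur-product-type positivity statement; if that fails, I would fall back on a global amortization that tracks $\imin(\sum_j x_j C_j)$ as the potential, combining Lemma~\ref{lem:smooth} and Lemma~\ref{lem:gincrease} to convert the aggregate active-iteration count in the phase into a growth of $\imin$ that is bounded by the while-loop cap $N$.
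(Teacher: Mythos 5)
Your proposal follows the paper's proof almost step for step: the same lower bound $\delta = \Omega(\ve/N)$ from $\max\{\snorm{\sum_i x_i P_i},\snorm{\sum_i x_i C_i}\} = \O(N)$, the same per-coordinate counting argument showing each $x_j$ grows multiplicatively from $1/(m\snorm{P_j})$ to $\O(N/\snorm{P_j})$ in $\O(\log(mN)/\delta)$ increments, and the same idea of upgrading this to a bound on the total iteration count by showing the active sets within a phase are nested, which the paper deduces from the claim that $\local_j(x)$ is non-decreasing along the algorithm's run.

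Where you diverge is instructive. The paper asserts the monotonicity of $\local_j$ as ``easily seen for steps 3(d) and 3(e)'' and moves on. You correctly verify the easy half --- the denominator $\tr(e^{-\sum_i x_i C_i}\, C_j)$ is non-increasing under both the projection at 3(d) and the increment at 3(f), because the $C_i$'s are diagonal and commute --- but you rightly flag that the numerator $\tr(e^{\sum_i x_i P_i}\, P_j)$ being non-decreasing in $x$ is not a triviality. Your skepticism is well founded: $\exp$ is not operator monotone, and the Loewner (divided-difference) matrix $\bigl(\frac{e^{d_k}-e^{d_l}}{d_k-d_l}\bigr)_{k,l}$ is not positive semidefinite in general, so the Duhamel derivative $\int_0^1 \tr\bigl(e^{sX} Y e^{(1-s)X} P_j\bigr)\,ds$ can be negative even for PSD $Y$ and $P_j$; in fact one can construct $2\times 2$ examples with $0 \preceq Y \preceq X$ and $P_j \succeq 0$ where this derivative is strictly negative, so the constraint $\sum_{i}\alpha_i P_i \preceq \delta \sum_i x_i P_i$ present in the algorithm does not by itself rescue the claim. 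Your symmetrization idea reduces exactly to the positivity of that Loewner matrix, so it will not close the gap on its own. In short: you have reconstructed the paper's argument and, along the way, identified a step that the paper treats as obvious but that genuinely requires a further idea (or a reformulation of the nested-active-set claim, e.g.\ your suggested amortization against $\imin(\sum_j x_j C_j)$). Just be aware that this is a flaw you would need to resolve, not a side remark; your proof as written, like the paper's, is incomplete at exactly that point.
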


\begin{proof}
From Lemma~\ref{lem:optimal} and step 3(d) of the algorithm we have $\max\{\norm{\sum_{i=1}^mx_iP_i},\norm{\sum_{i=1}^mx_iC_i}\}=\O(N)$  throughout the algorithm. On the other hand we have $\max\{\norm{\sum_{i=1}^m \delta x_iP_i},\norm{\sum_{i=1}^m \delta x_iC_i}\}=\ve$ at step 3(e). Hence $\delta=\Omega(\ve/N)$ throughout the algorithm.

Let $x_j$ be increased in the last iteration of the while loop for a fixed value of $g$. Note that $x_j$ is initially $1/(m\norm{P_j})$ and at the end  $x_j$ is at most $10N/\norm{P_j}$ (since, using Lemma~\ref{lem:optimal}, $\norm{x_j P_j} \leq \norm{\sum_{i=1}^m x_j P_j} \leq 10 N$).  Hence the algorithm makes at most $\O(\log(mN)/\delta) = \O(N \log(mN)/\ve)$ increments for each $x_j$.
 
Note that $\local_j(x)$ only increases throughout the algorithm (easily seen for steps 3(d) and 3(e) of the algorithm). Hence since the last iteration of the while loop (for this fixed $g$)  increases $x_j$, it must be that each iteration of the while loop increases $x_j$.  Hence, the number of iterations of the while loop (for this fixed $g$) is $\O(N\log(mN)/\ve)$.
\end{proof}
We claim (without further justification) that each individual step in the algorithm can be performed in parallel time $\polylog(mn)$. Hence combining the above lemmas and using $N = \O(\frac{\ln(mn)}{\ve})$, we get
\begin{cor}
The parallel running time of the algorithm is upper bounded by $\polylog(mn) \cdot \frac{1}{\ve^4} \cdot \log \frac{1}{\ve}$.
\end{cor}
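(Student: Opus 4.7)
The plan is to combine Lemmas~\ref{phase} and~\ref{lem:increment} with the (asserted) per-iteration parallel cost of $\polylog(mn)$, and then substitute the value of $N$.

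First, I would bound the total number of iterations of the while loop. By Lemma~\ref{phase}, the quantity $g$ takes at most $\O(N/\ve)$ distinct values across the entire run. By Lemma~\ref{lem:increment}, during any maximal stretch on which $g$ is held fixed, the while loop executes at most $\O(N\log(mN)/\ve)$ times. Multiplying, the total number of iterations of the while loop is at most $\O(N^2\log(mN)/\ve^2)$. This bound holds in the case the algorithm returns an $x^*$; if instead the algorithm returns $\mathsf{infeasible}$ it must do so within this many iterations as well, since both lemmas would apply to the truncated execution up to that point.

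Second, I would invoke the paper's claim that every step inside the while loop (evaluating $\local_j(x)$ and $\myglobal(x)$, updating $g$, testing infeasibility, projecting the $C_j$ onto the eigenspace at step~3(d), and choosing the increment $(\alpha,\delta)$ at step~3(e)) admits a parallel implementation in time $\polylog(mn)$. Each while-loop iteration then costs $\polylog(mn)$ parallel time, so the overall parallel running time is at most
\[
\polylog(mn)\cdot\frac{N^{2}\log(mN)}{\ve^{2}}.
\]

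Third, I would substitute $N=\O(\log(mn)/\ve)$, which comes from step~2 of the algorithm together with the trivial bound $\norm{\sum_i x_i P_i}\le 1$ at initialization. The $N^2/\ve^2$ factor becomes $\O(\log^2(mn)/\ve^4)$, and $\log(mN)=\O(\log(mn)+\log(1/\ve))$. Folding the $\log(mn)$ summand (and the $\log^2(mn)$ prefactor) into $\polylog(mn)$ leaves the remaining $\log(1/\ve)$ summand as a separate factor, yielding the stated bound $\polylog(mn)\cdot\ve^{-4}\cdot\log(1/\ve)$.

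The main obstacle is not the arithmetic above but rather justifying the $\polylog(mn)$ per-iteration parallel cost that the paper defers. The one nontrivial step is computing, to the precision required, the traces $\tr(\exp(\sum_i x_iP_i)\cdot P_j)$ and $\tr(\exp(-\sum_i x_iC_i)\cdot C_j)$ for all $j$ simultaneously; this reduces to approximating $\exp(A)$ for an $n\times n$ Hermitian matrix of norm $\O(N)=\O(\polylog(mn)/\ve)$, which can be done in NC via truncated power series (or Chebyshev expansion) to additive error $2^{-\polylog(mn)}$, followed by parallel matrix multiplication and trace evaluation. The eigenspace projection at step~3(d) similarly reduces to applying a smooth approximation of an indicator function to $\sum_i x_iC_i$ and hence again to parallel matrix polynomial evaluation. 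Once these primitives are in place the corollary follows from the counting argument above.
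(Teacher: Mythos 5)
Your proposal follows essentially the same route the paper takes: multiply the $\O(N/\ve)$ bound on the number of distinct values of $g$ from Lemma~\ref{phase} by the $\O(N\log(mN)/\ve)$ bound per fixed $g$ from Lemma~\ref{lem:increment}, multiply by the asserted $\polylog(mn)$ per-iteration parallel cost, and substitute $N=\O(\log(mn)/\ve)$. The paper's ``proof'' is precisely this one-line calculation (with the per-step cost explicitly left unjustified), so your arithmetic matches it; the only additional content you supply is a sketch of how the matrix-exponential traces and the eigenspace projection in step~3(d) can be realized in NC, which the paper deliberately defers.

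One small point worth tightening: after substituting $N=\O(\log(mn)/\ve)$, the factor $\log(mN)$ is $\O(\polylog(mn)+\log(1/\ve))$, which is a sum rather than a product. Turning that sum into the claimed product $\polylog(mn)\cdot\log(1/\ve)$ requires $\log(1/\ve)\ge 1$ (equivalently $\ve\le 1/e$, or absorbing the borderline case into constants), so a clause to that effect would make the folding step airtight. This is a cosmetic issue that the paper itself glosses over in exactly the same way.
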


\bibliographystyle{plain}
\bibliography{ref}

\appendix
\section{Deferred proofs}

\begin{proofof}{Lemma~\ref{lem:smooth}}
We will use the following Golden-Thompson inequality.
\begin{fact}\label{lem:golden}
For Hermitian matrices $A,B : ~ \tr(\exp(A+B))\leq\tr\exp(A)\exp(B).$
\end{fact}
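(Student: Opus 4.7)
The plan is to prove the Golden--Thompson inequality by combining two classical ingredients: the Lie--Trotter product formula, and iterated use of a basic trace-squaring inequality (which, after iteration, is essentially the Lieb--Thirring inequality for powers of two). First, the Lie--Trotter identity asserts that for any matrices $A, B$,
$$\exp(A+B) = \lim_{n \to \infty}\bigl(\exp(A/n)\exp(B/n)\bigr)^n,$$
and continuity of the trace gives $\tr(\exp(A+B)) = \lim_{n\to\infty}\tr\bigl((\exp(A/n)\exp(B/n))^n\bigr)$. Setting $X = \exp(A/n)$ and $Y = \exp(B/n)$, both matrices are positive semidefinite since $A, B$ are Hermitian, and $X^n = \exp(A)$, $Y^n = \exp(B)$. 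The proof therefore reduces to establishing the trace inequality
$$\tr\bigl((XY)^n\bigr) \leq \tr(X^n Y^n) \qquad (\ast)$$
for all positive semidefinite $X, Y$ and all positive integers $n$ (it suffices to handle $n = 2^k$ along that subsequence of the Lie--Trotter limit). Granting $(\ast)$, every term of the limiting sequence is bounded by $\tr(\exp(A)\exp(B))$, and Golden--Thompson follows.

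The key step in proving $(\ast)$ is the one-round inequality, for positive semidefinite $X, Y$ and integer $m \geq 1$:
$$\tr\bigl((XY)^{2m}\bigr) \leq \tr\bigl((X^2 Y^2)^m\bigr). \qquad (\dagger)$$
Iterating $(\dagger)$ exactly $k$ times, starting from the pair $(X, Y)$ with $n = 2^k$, produces
$$\tr\bigl((XY)^{2^k}\bigr) \leq \tr\bigl((X^2 Y^2)^{2^{k-1}}\bigr) \leq \cdots \leq \tr\bigl(X^{2^k} Y^{2^k}\bigr),$$
which is $(\ast)$ for $n = 2^k$. To establish $(\dagger)$, the starting point is the elementary bound $\tr(M^2) \leq \tr(MM^*)$, valid for any square matrix $M$ whose squared trace is real; this follows because $i(M - M^*)$ is Hermitian, so $(M - M^*)^2 \leq 0$ in the PSD order. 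Apply this with $M = (XY)^m$: the quantity $\tr(M^2) = \tr((XY)^{2m})$ is real because $(XY)^{2m}$ is similar to the PSD matrix $(X^{1/2} Y X^{1/2})^{2m}$. This yields $\tr((XY)^{2m}) \leq \tr((XY)^m (YX)^m)$, and a second use of the same squaring trick (together with cyclic invariance of the trace) converts the right-hand side into $\tr((X^2 Y^2)^m)$, completing $(\dagger)$.

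The main obstacle lies in this last conversion step. Small explicit examples show that $\tr((XY)^m(YX)^m)$ is in general strictly smaller than $\tr((X^2 Y^2)^m)$, so cyclic invariance alone is insufficient: another application of the squaring inequality, or equivalently of Cauchy--Schwarz for the Hilbert--Schmidt inner product $\langle P, Q \rangle = \tr(PQ^*)$, is needed, and the factors must be grouped carefully. The other ingredients---the Lie--Trotter product formula, continuity of the trace, and the basic bound $\tr(M^2) \leq \tr(MM^*)$---are standard and contribute only routine verifications.
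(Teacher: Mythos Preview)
The paper does not prove this statement at all: it is recorded as a \emph{Fact} (the classical Golden--Thompson inequality) and used as a black box in the proof of Lemma~\ref{lem:smooth}. So there is no proof in the paper to compare against; your attempt is genuinely supplying something the paper omits.

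Your outline is the standard route (Lie--Trotter plus iterated squaring), and everything up to and including the bound $\tr((XY)^{2m})\le \tr\bigl((XY)^m(YX)^m\bigr)$ is correct. The gap is exactly where you flag it, and it is more serious than ``one more Cauchy--Schwarz with careful grouping.'' Writing $M=XY$, what you need is
\[
\tr\bigl(M^m(M^*)^m\bigr)\;\le\;\tr\bigl((MM^*)^m\bigr),
\]
since $\tr((MM^*)^m)=\tr((XY^2X)^m)=\tr((X^2Y^2)^m)$ by cyclicity. For $m=1$ this is an identity, and for $m=2$ one can push it through with a doubly-stochastic/AM--GM argument in the SVD basis, but for general $m$ it is precisely the statement that the singular values of $M^m$ are weakly (log-)majorized by the $m$-th powers of the singular values of $M$. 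That is Weyl's inequality (or the Horn/Lieb--Thirring circle of results); it does not drop out of another application of $\lvert\tr N^2\rvert\le\tr(NN^*)$, no matter how the factors are grouped. Concretely, already at $m=2$ the intermediate quantity $\tr((XY)^2(YX)^2)=\tr(X^2(YXY)^2)$ cannot be converted to $\tr((X^2Y^2)^2)$ by cyclicity plus one Cauchy--Schwarz step.

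A clean fix is to invoke Weyl's eigenvalue--singular value majorization to obtain $\lvert\tr(M^{2m})\rvert\le\tr((MM^*)^m)$ directly (this is how Thompson's original argument proceeds), or to cite the Araki--Lieb--Thirring inequality. Either way, one additional nontrivial ingredient beyond Cauchy--Schwarz is required; once that is in place, your iteration and the Lie--Trotter limit finish the proof exactly as you describe.
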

We will also need the following fact.
\begin{fact}\label{fact:2} Let $A$ be positive semidefinite with $\norm{A}\leq \ve \leq 1$. Then,
$$\exp(A) \leq \I + (1+\ve)A \quad \mbox{ and} \quad \exp(-A) \leq \I - (1-\ve/2)A .$$
\end{fact}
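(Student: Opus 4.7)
\medskip

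\noindent\textbf{Proof proposal for Fact~\ref{fact:2}.}

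The plan is to reduce both matrix inequalities to scalar inequalities on the interval $[0,\ve]$ via the spectral theorem, and then verify those scalar inequalities by elementary analysis. Since $A$ is positive semidefinite with $\norm{A}\leq\ve\leq 1$, we may write $A = U D U^\ast$ with $D$ diagonal and every diagonal entry $t$ satisfying $0\leq t\leq\ve$. Both sides of each claimed inequality are then simultaneously diagonalized by $U$, so it suffices to prove, for every $t\in[0,\ve]$, the two scalar bounds
\begin{align*}
e^{t} &\leq 1 + (1+\ve)\,t, \\
e^{-t} &\leq 1 - (1-\ve/2)\,t .
\end{align*}

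For the first bound I would use convexity of $e^x$: on the interval $[0,\ve]$ the chord from $(0,1)$ to $(\ve,e^{\ve})$ lies above the graph, giving $e^{t}\leq 1 + t\cdot\frac{e^{\ve}-1}{\ve}$. It therefore suffices to show $\frac{e^{\ve}-1}{\ve}\leq 1+\ve$ for $\ve\in(0,1]$. Expanding the Taylor series and dividing by $\ve$ reduces this to $\tfrac12+\tfrac{\ve}{6}+\tfrac{\ve^{2}}{24}+\cdots\leq 1$, whose left-hand side is maximized at $\ve=1$ where it equals $e-2<1$.

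For the second bound I would start from the standard inequality $e^{-t}\leq 1 - t + t^{2}/2$ for $t\geq 0$, which follows by letting $f(t)=1-t+t^{2}/2-e^{-t}$ and checking $f(0)=f'(0)=0$ and $f''(t)=1-e^{-t}\geq 0$. Then, since $t\leq\ve$, one has $t^{2}/2\leq\ve t/2$, and combining gives $e^{-t}\leq 1 - t + \ve t/2 = 1 - (1-\ve/2)\,t$, as required.

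The main (very mild) obstacle is that although the reduction ``$A\preceq B$ iff the scalar inequality holds on the spectrum'' is automatic for functions of a single Hermitian matrix, one must be careful that both sides of the desired matrix inequality really are polynomials in $A$ alone (which they are here, since $\I$ and $A$ commute). Once this is noted, the remainder is just the two scalar checks sketched above, and Fact~\ref{fact:2} follows.
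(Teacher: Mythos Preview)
Your argument is correct. The paper itself does not prove Fact~\ref{fact:2}; it is simply asserted as a known fact and then used inside the proof of Lemma~\ref{lem:smooth}. Your reduction via the spectral theorem to the scalar inequalities $e^{t}\leq 1+(1+\ve)t$ and $e^{-t}\leq 1-(1-\ve/2)t$ on $[0,\ve]$ is the natural route, and both scalar verifications you give are valid: the convexity/chord bound plus the estimate $(e^{\ve}-1)/\ve\leq 1+\ve$ for $\ve\in(0,1]$ handles the first, and the Taylor remainder bound $e^{-t}\leq 1-t+t^{2}/2$ combined with $t\leq\ve$ handles the second. Your closing remark that both sides are functions of the single Hermitian matrix $A$ (and hence simultaneously diagonalizable) is exactly the justification needed for the spectral reduction.
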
 
Consider,
\begin{eqnarray*}
\lefteqn{\imax(\sum_{j=1}^m (x_j + \alpha_j) A_j )-\imax(\sum_{j=1}^m x_j  A_j) }\\
&=&\ln \(\frac{\tr\exp(\sum_{i=1}^m(x_i+\alpha_i)A_i)}{\tr\exp(\sum_{i=1}^mx_iA_i)} \)\\
&\leq&\ln \(\frac{\tr\exp(\sum_{i=1}^mx_iA_i)\exp(\sum_{j=1}^m\alpha_jA_j)}{\tr\exp(\sum_{i=1}^mx_iA_i)}\) \quad \mbox{(from Fact~\ref{lem:golden})}\\
&=&\ln\(\frac{\tr\exp(\sum_{i=1}^mx_iA_i)(\I + (1+ \ve)(\sum_{j=1}^m\alpha_jA_j))}{\tr\exp(\sum_{i=1}^mx_iA_i)}\) \quad \mbox{(from Fact~\ref{fact:2})}\\
&=& \ln\(1 + \frac{(1+\ve)\tr\exp(\sum_{i=1}^mx_iA_i)(\sum_{j=1}^m\alpha_jA_j)}{\tr\exp(\sum_{i=1}^mx_iA_i)}\)\\
&\leq&\frac{(1+\ve)\tr\exp(\sum_{i=1}^mx_iA_i)(\sum_{j=1}^m\alpha_jA_j)}{\tr\exp(\sum_{i=1}^mx_iA_i)} \quad \mbox{(since $\ln(1+a) \leq a$ for all real $a$)}
\end{eqnarray*}
The desired bound on $\imin(\sum_{j=1}^m (x_j + \alpha_j)  A_j)-\imin(\sum_{j=1}^m x_j  A_j)$ follows by analogous calculations.
\end{proofof}

\end{document}